\documentclass{ifacconf}
\usepackage{natbib}        % required for bibliography
\usepackage{amsmath, amsfonts, amssymb, mathtools}
\usepackage{enumerate, graphicx, xcolor}
\usepackage{url}
\hyphenpenalty=950
\exhyphenpenalty=1000
\binoppenalty=10000 
\relpenalty=10000

\begin{document}
\begin{frontmatter}
   \title{Identification of a Kalman Filter: Consistency of Local Solutions}
   
   \author[First]{L{\'e}o Simpson} 
   \author[First]{and Moritz Diehl} 
   
   \address[First]{
      Department of Mathematics and
      Department of Microsystems Engineering (IMTEK),
      University of Freiburg, 79110 Freiburg, Germany
      (e-mails: \tt{\small firstname.familyname@imtek.uni-freiburg.de} )
   }
      
\begin{abstract}
Prediction error and maximum likelihood methods are powerful tools for identifying linear dynamical systems and, in particular, enable the joint estimation of model parameters and the Kalman filter used for state estimation.
A key limitation, however, is that these methods require solving a generally non-convex optimization problem to global optimality.
This paper analyzes the statistical behavior of local minimizers in the special case where only the Kalman gain is estimated.
We prove that these local solutions are statistically consistent estimates of the true Kalman gain.
This follows from asymptotic unimodality: as the dataset grows, the objective function converges to a limit with a unique local (and therefore global) minimizer.
We further provide guidelines for designing the optimization problem for Kalman filter tuning and discuss extensions to the joint estimation of additional linear parameters and noise covariances.
Finally, the theoretical results are illustrated using three examples of increasing complexity.
The main practical takeaway of this paper is that difficulties caused by local minimizers in system identification are, at least, not attributable to the tuning of the Kalman gain.

\end{abstract}

\begin{keyword}
   Linear systems,
   Linear system identification,
   Estimation and filtering,
   Kalman filtering,
   Optimization.
\end{keyword}
\end{frontmatter}

% \begin{definition of commands}
\newcommand{\cl}[1]{{\mathcal #1}}
\newcommand{\mr}[1]{{\mathrm{#1}}}
\newcommand{\R}{{\mathbb{R}}}
\newcommand{\E}{{\mathbb{E}}}
\newcommand{\N}{{\mathbb{N}}}

% Commands for inequality relations
\newcommand{\GEQ}{{\, \succcurlyeq \, }}
\newcommand{\LEQ}{{\, \preccurlyeq \, }}
\newcommand{\GEQS}{{\, \succ \, }}
\newcommand{\LEQS}{{\, \prec \, }}

% Maths operations
\newcommand{\Ker}[1]{\mr{Ker} \of{#1}}
\newcommand{\Trace}[1]{\mr{Tr} \of{#1}}
\newcommand{\abs}[1]{\left\lvert#1\right\rvert}
\newcommand{\norm}[1]{\left\lVert#1\right\rVert}
\newcommand{\mylimit}[1]{\xrightarrow[#1\to+\infty]{}}
\newcommand{\of}[1]{{\left(#1\right)}}
\newcommand{\ofc}[1]{{\left[#1\right]}}
\newcommand{\setdef}[2]{\left\{ #1 \; \big\vert \; #2 \right\}}
\newcommand{\clo}[1]{\mr{cl}\of{#1}}
\newcommand{\vect}[1]{\mr{vec}\of{#1}}
\newcommand{\logdet}[1]{\log\mr{det}\of{#1}}

% Probability
\newcommand{\Eof}[1]{\E \left[#1\right]}
\newcommand{\sEof}[1]{\E [#1]}
\newcommand{\Covof}[1]{\mr{Cov}\left[#1\right]}
\newcommand{\Pof}[1]{\mathbb{P}\left[#1\right]}
\newcommand{\Gauss}[2]{\cl{N} \of{ #1, \, #2 }  }
\newcommand{\Gausstiny}[2]{\cl{N}(#1, \, #2)  }

% Dimensions
\newcommand{\nx}{{n}}
\newcommand{\nuu}{{p}}
\newcommand{\ny}{{q}}
\newcommand{\nbeta}{{n_{\beta}}}

% Commands for drafting
% \newcommand{\changes}[1]{#1}
\newcommand{\changes}[1]{{\color{red} #1}}

% My variables
\newcommand{\dK}{{D}}
\newcommand{\Set}{\cl{L}}
\newcommand{\K}{L}
\newcommand{\cUS}{\gamma}
\newcommand{\rhoUS}{\lambda}

% Formatting commands
\newcommand{\myparagraph}[1]{\vspace{-0.5cm}\paragraph*{#1} ~}
\newenvironment{proof}{\begin{pf}}{\vspace{-0.1cm} \\ \hspace*{\fill} \qed \end{pf} }
\newcommand\blfootnote[1]{% % footnote without marker 
  \begingroup
  \renewcommand\thefootnote{}\footnote{#1}%
  \addtocounter{footnote}{-1}%
  \endgroup
}
\newcommand{\IMAGES}{images}

% Commands to make things small
\newcommand{\transp}{^{\! \top} \!}  % transpose
\newcommand{\sminus}{\hspace{-0.07cm}-\hspace{-0.07cm}} % small minus
\newcommand{\splus}{\!+\!} % small plus
\newcommand{\sequal}{\!=\!} % small equal

% \end{definition of commands}

\section{Introduction}\label{section-introduction}

Identifying a model from measurements is an important task, especially for designing model-based controllers.
% such as Model Predictive Control (MPC)~\citep{Rawlings2017}.
To efficiently apply such algorithms, three requirements are central: an accurate predictive model, efficient online state estimation, and, sometimes, uncertainty quantification.
Regarding the first requirement, it is often a mixture of prior knowledge from physics-based modeling and data-driven modeling.
A popular approach for this task is parametric system identification using Prediction Error Methods (PEM)~\citep{Ljung2002PEM} or Maximum Likelihood Estimation (MLE)~\citep{Astrom1979, Simpson2023}.
% There also exist purely data-driven modeling approaches, such as subspace identification methods~\citep{Verhaegen1994}.
% The joint task of system identification and covariance estimation for the same operating data was studied, for example, with subspace-type identification methods in~\citep{Kuntz2022} or with PEM in ~\citep{Simpson2023}.
Regarding the second requirement, for Linear Time-Invariant (LTI) systems, online state estimation is often performed using Kalman filters~\citep{Anderson1979}.
Such a filter requires knowledge of the process and measurement noise covariance matrices, which are often difficult to derive from the system's physics.
Several approaches exist to estimate them from data~\citep{Abbeel2005, Odelson2006}, but if they are entirely unknown, it is often preferable to estimate the Kalman gain directly with PEM or MLE, possibly jointly with other parameters~\citep{Kuntz2025}.
% One reason is that these covariance matrices are not uniquely defined by the input-output system.
This motivates analyzing PEM for Kalman gain estimation.
While this method has strong statistical guarantees, they require solving a generally non-convex optimization problem to global optimality for these guarantees to hold.
This is a limitation because derivative-based optimization algorithms can only guarantee convergence to a local minimizer.
A natural question arises: can we still provide statistical guarantees for local minimizers?

% \myparagraph{Contribution.}
This paper provides a positive answer to this question for the case of Kalman gain estimation using PEM.
This follows from the fact that the optimization problem is asymptotically unimodal: as the amount of data goes to infinity, the limit of the objective function has a unique local (and therefore global) minimizer.
Thus, \emph{asymptotically, non-global local minimizers are not attributable to the estimation of the Kalman gain}.
While we do not analyze the finite-sample case, in our experience, a few samples are usually sufficient to observe the unimodality, as illustrated by numerical examples.
We also propose some extensions of this result to more general cases, such as the joint identification of the innovation covariance matrix with MLE.
However, one cannot provide guarantees for the completely general case because a poorly chosen parameterization can always lead to artificial local minima.

% \myparagraph{Literature Review.}
The asymptotic unimodality of PEM and MLE has been proven for a few specific classes of Single-Input Single-Output (SISO) systems.
A summary of classic results is given in~\citet[Section 10.5]{Ljung1999SystemId}.
A notable one is asymptotic unimodality for ARMA models~\citep{Astrom1974}, which are black-box single-output autonomous LTI systems.
Other results exist for SISO systems with specific structures or input design~\citep{Soderstrom1975,Goodwin2003,Zou2009,Eckhard2012}.
% Also, \citet{Eckhard2012} is an interesting paper where the inputs are chosen to ensure global convergence of PEM.
% These results are all formulated in the frequency domain, and their proofs often rely on complex analysis techniques specific to the model structures of SISO systems.
To the best of our knowledge, this paper provides the first asymptotic unimodality result for multiple-output systems (apart from the trivial case of linear regression), and the first time-domain analysis of this problem.
% , which, in our opinion, makes the statement and proofs simpler as they do not involve complex analysis.

\myparagraph{Outline of this paper.}
In Section~\ref{section-problem}, we define the stochastic system and identification problem of interest.
In Section~\ref{section-method}, we formulate PEM for this problem, with stability-enforcing constraints, and recall the state-of-the-art result on the consistency of the global solutions.
In Section~\ref{section-consistency-local}, we state the main results of this paper: asymptotic unimodality and consistency of local minimizers.
These theoretical results are illustrated with numerical examples in Section~\ref{section-numerical}.
Finally, in Section~\ref{section-extension}, we discuss possible extensions of these results to more general settings.
Conclusions and future research directions are discussed in Section~\ref{section-conclusion}.

\myparagraph{Notation.}
Throughout this paper, we denote by $I$ the identity matrix of appropriate dimensions.
We use $\norm{x}_P \coloneqq \sqrt{x\transp P x}$ to denote the weighted norm associated with $P \GEQS 0$, and omit the subscript when $P = I$.
% We use the matrix inequality notation $M_1 \LEQ M_2$ (resp. $M_1 \LEQS M_2$) when the matrix $M_2 - M_1$ is positive semi-definite (resp. positive-definite).
The trace of a matrix $M$ is denoted by $\Trace{M}$.
Finally, $\clo{\Set}$ denotes the closure of a set $\Set$.

\section{Problem Statement}\label{section-problem}
Consider the following system dynamics, in the innovation form, for $k=0, \dots, N$:
\begin{subequations}\label{eq-system}
\begin{align}
x_{k+1} &= A x_k + B u_k + \K e_k, \\
y_k &= C x_k + e_k.
\end{align}
\end{subequations}
Here, $x_k \in \R^{\nx}, u_k \in \R^{\nuu}, y_k \in \R^{\ny}$ and $e_k \in \R^{\ny}$ denote the state, the input, the output, and the innovation, respectively.
The goal is to estimate the observer gain $\K \in \Set  \subset \R^{\nx \times \ny} $ from the data $\{u_k, y_k\}_{k=0, \dots, N}$. \\
The key motivation for this innovation form is that, under mild assumptions, LTI systems with process and measurement noise can be equivalently represented in this form, with $\K$ being the Kalman gain.

\begin{assum}[Assumptions on the system]\label{assum-system}
~
% We make the following assumptions on the system~\eqref{eq-system} and on the dataset.
\begin{enumerate}[a)]
    \itemsep0.1em
    \item The matrices $A, B, C$ and the initial state $x_0$ are known.
    \item The pair $(A, C)$ is observable.
    \item The matrix $C$ is full row-rank, i.e., $\mr{rank}(C) = \ny$.
    \item The data are generated by~\eqref{eq-system} with $\K = \K^\star \in \Set$.
    \item The matrix $A - \K^\star C$ is stable, i.e. $\rho(A - \K^\star C) < 1$.
    \item The innovations $e_k$ are zero-mean independent random variables such that:
    \begin{align}
        \Eof{e_k e_k^\top}=S^\star\GEQS 0, \quad \text{and} \quad
        \sup_{k \in \N} \Eof{\norm{e_k}^4} < \infty.
    \end{align}
\end{enumerate}
\end{assum}
The covariance matrix $S^\star \GEQS 0$ is, in general, unknown, but we do not attempt to estimate it (apart from Section~\ref{section-extension}).
Note that assumption e) is not restrictive as the Kalman gain $\K^\star$ is always stabilizing the system when $(A, C)$ is observable.
% % OPTIONAL SENTENCE
% Here, we do not assume open-loop experiments, i.e., $u_k$ can correlate with past outputs $y_{k-i}$.

\section{The prediction error method}\label{section-method}
In the presented settings, the PEM~\citep{Ljung2002PEM} leads to the following optimization problem:
\begin{align}\label{eq-PEM}
    \underset{\K \in \Set}{\text{minimize}} \quad  V_N(\K) \coloneqq \frac{1}{N} \sum_{k=1}^N \norm{y_k - C \hat{x}_k(\K)}^2_{W}
\end{align}
where the weighting matrix $W \in \R^{\ny \times \ny}$ is some positive-definite matrix, and the predicted states $\hat{x}_k(\K)$ are obtained from the Kalman filter equations:
\begin{align}\label{eq-kalman}
    \hat{x}_{k+1}(\K) &= A \hat{x}_k(\K) + B u_k + \K \of{y_k - C \hat{x}_k(\K)}.
\end{align}
A crucial observation is that the predicted states $\hat{x}_k(\K)$ depend nonlinearly on the parameter $\K$ because of the term ``$\K C \hat{x}_k(\K)$'' in~\eqref{eq-kalman}.
In fact, this is what makes the optimization problem~\eqref{eq-PEM} non-convex in general.
Now we make an important assumption regarding the feasible set $\Set$.
\begin{assum}[Uniform stability]\label{assum-K}
    The family of matrices $\setdef{A - \K C}{\K \in \Set}$ are uniformly stable, i.e., for some constants $\cUS > 0$ and $\rhoUS \in (0,1)$, we have:
    \begin{align}\label{eq-uniform-stability}
        \forall \K \in \Set, \quad \forall i \in \N, \quad \norm{(A - \K C)^i} \leq \cUS \rhoUS^i.
    \end{align}
\end{assum}
% , where $\clo{\Set}$ denotes the closure of $\Set$.
% (by continuity of the functions $\K \mapsto \norm{(A - \K C)^i}$)
Importantly, $\Set$ is necessarily bounded because~\eqref{eq-uniform-stability} implies $\norm{\K C} \leq \norm{A} + \cUS$ and $C$ is full row-rank (cf. Assumption~\ref{assum-system}).
Therefore, $\clo{\Set}$ is compact, which will be helpful later.
By a continuity argument, \eqref{eq-uniform-stability} also holds for all $\K \in \clo{\Set}$.

From a more practical point of view, there exist several ways to impose such a uniform stability constraint.
For example, in~\citet{Kuntz2025}, this is done via Linear Matrix Inequalities (LMI).
Similarly, we propose to ensure~\eqref{eq-uniform-stability} using some Lyapunov equation constraint, as it is done in~\citet{Diehl2009a}:
\begin{subequations}\label{eq-PEM-stability}
    \begin{align}
        \underset{\K \in \R^{\nx \times \ny}, P \in \R^{\nx \times \nx} }{\text{minimize}} \;  &\frac{1}{N} \sum_{k=1}^N \norm{y_k - C \hat{x}_k(\K)}^2_{W}, \\
        \text{subject to} \quad
        &P = (A - \K C) P (A -  \K C)\transp + I, \label{constraint-Lyapunov} \\
        &  \alpha \Trace{P-I} \leq 1, \label{constraint-Trace} \\
        & P \GEQ 0, \label{constraint-PSD}
    \end{align}
\end{subequations}
% \begin{subequations}\label{eq-stability-constraint}
%     \begin{align}
%         & P = (A - \K C) P (A - \K C)\transp +  I, \\
%         & P \GEQ 0, \quad \alpha \Trace{P} \leq 1,
%     \end{align}
% \end{subequations}
for some $\alpha > 0$.
In the our experience, this method is quite robust against the choice of parameter $\alpha$: it only needs to be small enough for $\K^\star$ to be feasible in~\eqref{eq-PEM-stability}.
We choose $\alpha = 0.02$ in the numerical examples.

The constraints~\eqref{constraint-Lyapunov}-\eqref{constraint-PSD} are equivalent to $\rho_{\alpha}(A-\K C) \leq 1$ where $\rho_{\alpha}(\cdot)$ is the smooth spectral radius approximation used in~\citet{Diehl2009a}.
As noted there, the constraint $P \GEQ 0$ is in fact never active (because $P \GEQ I \GEQS 0$ for any feasible point), so even though~\eqref{eq-PEM-stability} is a nonlinear semi-definite program, it can be (almost) treated as an ordinary nonlinear program in practice.

Proposition~\ref{prop-K-stability} below draws a connection between~\eqref{eq-PEM-stability} and Assumption~\ref{assum-K}.
A proof is provided in the Appendix. 
\begin{prop}\label{prop-K-stability}
    For any $\alpha > 0$, the set $\Set_{\alpha}$ defined below satisfies Assumption~\ref{assum-K}:
    \begin{align}\label{eq-K-stability}
        \Set_{\alpha}  \sequal
        \bigg\{
            \K \!\in \R^{\nx \times \ny} \text{ s.t.\ (\ref{constraint-Lyapunov}-\ref{constraint-Trace}) holds for some } P \GEQ 0
        \! \bigg\}.
    \end{align}
\end{prop}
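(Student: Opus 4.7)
The plan is to split the argument according to the two claims of the proposition. For the first claim, fix any $L \in \Set_\alpha$ with witness $P$ and set $M = A - L C$. The Lyapunov equation \eqref{constraint-Lyapunov} immediately implies $P \GEQ I$, so $P$ is invertible, and combined with \eqref{constraint-Trace} this gives the $L$-independent bound $\norm{P} \leq \Trace{P} \leq \nx + 1/\alpha$. The key calculation is to apply the equation along a backward-adjoint trajectory $w_{k+1} = M\transp w_k$: using \eqref{constraint-Lyapunov}, one obtains $w_{k+1}\transp P w_{k+1} = w_k\transp(P - I)w_k$, which, combined with $\|w_k\|^2 \geq w_k\transp P w_k / \norm{P}$, yields the geometric decay
\begin{equation*}
w_k\transp P w_k \leq \Bigl(1 - \tfrac{1}{\norm{P}}\Bigr)^k w_0\transp P w_0.
\end{equation*}
From this one reads off $\norm{(M\transp)^k}^2 \leq \norm{P}\bigl(1 - 1/\norm{P}\bigr)^k$, and since $\norm{M^k} = \norm{(M\transp)^k}$ and $\norm{P}$ is bounded above by $\nx + 1/\alpha$ uniformly in $L$, one obtains \eqref{eq-uniform-stability} with explicit constants $\cUS = \sqrt{\nx + 1/\alpha}$ and $\rhoUS = \sqrt{1 - 1/(\nx + 1/\alpha)} \in (0,1)$, proving Assumption \ref{assum-K}.

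For the second claim, I would invoke Assumption \ref{assum-system}(d): since $A - \K^\star C$ is stable, the discrete Lyapunov equation $P = (A - \K^\star C) P (A - \K^\star C)\transp + I$ admits a unique solution $P^\star \GEQ I$. This $P^\star$ trivially satisfies \eqref{constraint-Lyapunov} and \eqref{constraint-PSD}, so it only remains to ensure \eqref{constraint-Trace}, which holds as soon as $\alpha \leq 1/\Trace{P^\star - I}$ (and trivially for all $\alpha$ in the degenerate case $A = \K^\star C$). Hence $\K^\star \in \Set_\alpha$ for $\alpha$ small enough.

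I expect the first part to be the main obstacle, specifically the need to extract an \emph{$L$-uniform} decay rate rather than just stability of each individual $M$. The nontrivial idea there is to let the Lyapunov certificate $P$ itself play the role of the quadratic Lyapunov function and then exploit the trace constraint \eqref{constraint-Trace} to bound $\norm{P}$ independently of $L$, since the contraction factor depends on $L$ only through $\norm{P}$. The rest is essentially bookkeeping on the Lyapunov recursion and an appeal to the standard existence result for the discrete Lyapunov equation under stability.
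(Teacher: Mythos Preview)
Your proof is correct and follows essentially the same Lyapunov-contraction strategy as the paper: use $P$ from \eqref{constraint-Lyapunov} as a quadratic Lyapunov function, exploit the trace constraint \eqref{constraint-Trace} to bound $P$ uniformly in $L$, and iterate to obtain the geometric decay \eqref{eq-uniform-stability}. The only notable difference is that the paper works directly with the matrix inequality $P \GEQ (1+\alpha)(A-\K C)P(A-\K C)\transp$ (obtained from $P-I \LEQ \alpha^{-1} I$, since $P-I\GEQ 0$ and $\Trace{P-I}\leq \alpha^{-1}$), which yields the slightly sharper constants $\cUS=\sqrt{1+\alpha^{-1}}$ and $\rhoUS=1/\sqrt{1+\alpha}$; your bound $\norm{P}\leq \Trace{P}\leq n+\alpha^{-1}$ loses a factor of $n$ but is otherwise equivalent.
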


\myparagraph{Consistency of the global solution.}
It is well known that the PEM is strongly consistent~\citep[Theorem 8.2]{Ljung1999SystemId}: the global minimizers of~\eqref{eq-PEM} converge almost surely (i.e., with probability one) to the true parameters $\K^\star$ when $N$ goes to infinity.
The proof relies on the fact that, almost surely, the objective function $V_N(\K)$ converges uniformly to its expected value~\citep[Lemma 8.2]{Ljung1999SystemId}, and that $\K^\star$ minimizes this expected value:
% Option 1
% \begin{align}\label{eq-minimizer:EV}
%     \K^\star \in \; &\underset{\K \in \Set}{\arg\min} \quad
%     \Eof{V_N(\K)} \\
%     \nonumber & \quad = \Trace{S^\star W} + \Eof{\norm{C \of{\hat{x}_k(\K) - \hat{x}_k(\K^\star)}}^2_{W}}.
% \end{align}
% Option 2
% \begin{align}\label{eq-minimizer:EV}
%     \K^\star \in \; \underset{\K \in \Set}{\arg\min} \hspace{-0.5cm}
%     \underbrace{\Eof{V_N(\K)}}_{
%         = \Trace{S^\star W} + \Eof{\norm{C \of{\hat{x}_k(\K) - \hat{x}_k(\K^\star)}}^2_{W}}.
%     }
% \end{align}
% Option 3
\begin{align}\label{eq-minimizer:EV}
    \Eof{V_N(\K)}\sequal
    \Trace{S^\star W} \splus \underbrace{\Eof{\norm{C \of{\hat{x}_k(\K) - \hat{x}_k(\K^\star)}}^2_{W}}}_{
        \textup{minimized for } \K = \K^\star.
    }
\end{align}
Regarding the uniform convergence of $V_N(\K)$, the proof relies on a lemma for stochastic dynamical systems presented in~\citet[Theorem 2B.1]{Ljung1999SystemId} and repeated in the Appendix of the present paper.

\section{Consistency of local solutions}\label{section-consistency-local}
In this section, we prove that strong consistency also holds for local minimizers of~\eqref{eq-PEM} that are in the interior of $\Set$.
This relies on two main results: the first provides the limit of the objective function and its derivatives, and the second establishes the unimodality of this limit.

Before stating Lemma~\ref{lem-uniform-convergence}, we make some important simplifications of the function $V_N(\K)$:
\begin{align}\label{eq:VN-simplified}
    V_N(\K) &= \frac{1}{N} \sum_{k=1}^N \norm{e_k - C z_k(\K) }^2_{W},
\end{align}
where $z_k(\K) \coloneqq \hat{x}_k(\K) - \hat{x}_k(\K^\star)$ can be computed recursively from $z_0(\K) = 0$ and:
\begin{align}\label{eq:zk}
    z_{k+1}(\K) &= (A - \K C) z_k(\K) + (\K - \K^\star) e_k.
\end{align}
Interestingly, $V_N(\K)$ does not depend on the inputs $u_k$.

Now we define the steady-state error covariance $\bar{\Sigma}(\K)$ as the unique solution of the following Lyapunov equation:
\begin{align}\label{eq-def-Sigma-bar}
    \bar{\Sigma}(\K) &= (A \sminus \K C) \bar{\Sigma}(\K) (A \sminus \K C)\transp \splus (\K \sminus \K^\star\!) S^\star \! (\K \sminus \K^\star \!)\transp.
\!\!
\end{align}
This allows us to define the function $\bar{V}(\K)$ as:
\begin{align}\label{eq-def-V-bar}
    \bar{V}(\K) &\coloneqq  \Trace{W \of{ S^\star + C \bar{\Sigma}(\K) C\transp} },
\end{align}
Intuitively, $\bar{V}(\K)$ represents the expected value of the objective where we replaced the error covariances with the steady-state solution of the corresponding Lyapunov equation.
The following lemma states the convergence of $V_N(\K)$ and its gradient to $\bar{V}(\K)$.
\begin{lem}[Uniform convergence of the gradients]\label{lem-uniform-convergence}
    ~
    The gradient of $V_N(\cdot)$ converges almost surely and uniformly to the gradient of $\bar{V}(\cdot)$:
    \begin{align}\label{eq-uniform-convergence-dV}
        \Pof{\sup_{\K \in \Set} \norm{ \nabla V_N(\K) - \nabla \bar{V}(\K)} \mylimit{N} 0} = 1.
    \end{align}
\end{lem}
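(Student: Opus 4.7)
The plan is to express $\nabla V_N(\K)$ as a Cesàro average of a smooth, data-dependent functional, apply the uniform strong law of large numbers from Appendix~\ref{appendix-lemma-LLN} componentwise, and finally identify the almost sure limit with $\nabla \bar{V}(\K)$ via the Lyapunov equation~\eqref{eq-def-Sigma-bar}.

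First, I would differentiate the recursion for $z_k(\K)$ with respect to each entry $\K_{ij}$ to obtain the sensitivities $Z_k^{ij}(\K) \coloneqq \partial z_k(\K)/\partial \K_{ij}$. A direct computation shows that these satisfy the linear recursion driven by $A - \K C$ with forcing $E_{ij}\of{e_k - C z_k(\K)}$ (where $E_{ij}$ denotes the standard basis matrix with a single one at position $(i,j)$) and initial condition $Z_0^{ij}(\K) = 0$. By Assumption~\ref{assum-K}, both $z_k(\K)$ and $Z_k^{ij}(\K)$ are then stable linear filters of the innovation sequence with impulse responses that are uniformly exponentially bounded across $\K \in \clo{\Set}$. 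The resulting gradient expression is
\[
\frac{\partial V_N}{\partial \K_{ij}}(\K) = -\frac{2}{N} \sum_{k=1}^N \of{e_k - C z_k(\K)}\transp W C \, Z_k^{ij}(\K),
\]
a sample average of a quadratic functional of the i.i.d.\ innovations, smooth in $\K$.

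Second, I would invoke the uniform SLLN of Appendix~\ref{appendix-lemma-LLN} on each component. Its hypotheses are met thanks to the uniform exponential stability (Assumption~\ref{assum-K}), the continuous dependence on $\K$ over the compact set $\clo{\Set}$, and the bounded fourth-moment condition on $e_k$ from Assumption~\ref{assum-system}, which is precisely what is needed to control the covariance of a quadratic form in filtered innovations. This yields almost sure uniform convergence of $\nabla V_N(\K) - \Eof{\nabla V_N(\K)}$ to zero on $\Set$. Third, I would show $\Eof{\nabla V_N(\K)} \mylimit{N} \nabla \bar{V}(\K)$ uniformly in $\K$. Writing the expected summand in terms of the transient error covariance $\Sigma_k(\K) \coloneqq \Eof{z_k(\K) z_k(\K)\transp}$, which obeys the Lyapunov recursion associated to~\eqref{eq-def-Sigma-bar}, one checks that $\Sigma_k(\K) \to \bar{\Sigma}(\K)$ at a uniform exponential rate, so the Cesàro average inherits the same limit. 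Differentiating~\eqref{eq-def-Sigma-bar} and~\eqref{eq-def-V-bar} then identifies this limit with $\nabla \bar{V}(\K)$.

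The main obstacle will be verifying the hypotheses of the appendix lemma \emph{uniformly} in $\K$: one must show that the joint process $(e_k, z_k(\K), Z_k^{ij}(\K))$ is suitably quasi-stationary with moment bounds that hold uniformly over $\clo{\Set}$. This is precisely where Assumption~\ref{assum-K} and the compactness of $\clo{\Set}$ are essential, combined with the fourth-moment bound of Assumption~\ref{assum-system} and the uniform exponential decay of the filter impulse responses, which together upgrade pointwise a.s.\ convergence to uniform a.s.\ convergence.
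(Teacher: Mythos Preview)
Your proposal is correct and follows essentially the same two-step decomposition as the paper's proof: first use Lemma~\ref{lem-LLN} (applied componentwise to the stacked filtered process $(e_k - Cz_k(\K),\, \partial z_k(\K)/\partial \K_{ij})$) to get $\nabla V_N(\K) - \Eof{\nabla V_N(\K)} \to 0$ uniformly a.s., then use the Lyapunov recursion for $\Covof{z_k(\K)}$ and uniform stability to show $\Eof{\nabla V_N(\K)} \to \nabla \bar{V}(\K)$ uniformly. The only cosmetic slip is calling the innovations ``i.i.d.''---Assumption~\ref{assum-system}.e only gives independence with common covariance and a fourth-moment bound, which is all Lemma~\ref{lem-LLN} needs.
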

The proof is provided in the Appendix.
Note that the value of $V_N(\K)$ also converges uniformly to $\bar{V}(\K)$ over $\K \in \Set$, but we do not need this assertion here.

We are now ready to state our main result:
\begin{thm}[Unimodality of the limit]\label{thm-unimodality}
    The unique stationary point of $\bar{V}(\K)$ in $\clo{\Set}$ is $\K^\star$:
    \begin{align}
        \K \in \clo{\Set} \text{ and } \nabla \bar{V}(\K) = 0  \quad \iff \quad \K = \K^\star.
    \end{align}
\end{thm}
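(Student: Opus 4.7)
My plan is to derive a closed-form expression for $\nabla \bar{V}(\K)$ via the adjoint Lyapunov trick, and then to show that its vanishing on $\clo{\Set}$, combined with \eqref{eq-def-Sigma-bar}, forces $\bar{\Sigma}(\K) = 0$, which in turn forces $\K = \K^\star$ because $S^\star \GEQS 0$. The converse direction $\K = \K^\star \Rightarrow \nabla \bar{V}(\K^\star) = 0$ is immediate: plugging $\K = \K^\star$ into \eqref{eq-def-Sigma-bar} yields $\bar{\Sigma}(\K^\star) = 0$, after which the gradient formula below vanishes.

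\textbf{Gradient via adjoint.} Let $F \coloneqq A - \K C$ and $\Delta \coloneqq \K - \K^\star$. I would implicitly differentiate \eqref{eq-def-Sigma-bar} to write $d\bar{\Sigma}$ as the solution of a Stein equation with right-hand side of the form $(d\K)\, H^\top + H\, (d\K)^\top$, where $H \coloneqq \Delta\, S^\star - F\, \bar{\Sigma}(\K)\, C^\top$. Then, introducing the dual variable $\Pi(\K)$ as the unique solution of $\Pi = F^\top \Pi F + C^\top W C$, a standard trace identity gives
\begin{align*}
    \nabla \bar{V}(\K) \;=\; 2\, \Pi(\K)\, H.
\end{align*}
Since $(A, C)$ is observable (Assumption \ref{assum-system}a) and output injection preserves observability, $(F, C)$ is observable; together with $W \GEQS 0$ this makes $\Pi(\K) \GEQS 0$ through the series $\Pi(\K) = \sum_{i=0}^{\infty} (F^\top)^i C^\top W C\, F^i$. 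Stationarity is thus equivalent to $H = 0$, i.e.\ to
\begin{align*}
    (\K - \K^\star)\, S^\star \;=\; (A - \K C)\, \bar{\Sigma}(\K)\, C^\top.
\end{align*}

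\textbf{Closing the argument and main obstacle.} Substituting this identity into the forcing term $\Delta S^\star \Delta^\top = F\, \bar{\Sigma}(\K)\, C^\top \Delta^\top$ of \eqref{eq-def-Sigma-bar} reduces the Lyapunov equation to $\bar{\Sigma}(\K) = F\, \bar{\Sigma}(\K)\, (F + \Delta C)^\top$. The key algebraic cancellation $F + \Delta C = A - \K^\star C$ then gives
\begin{align*}
    \bar{\Sigma}(\K) \;=\; (A - \K C)\, \bar{\Sigma}(\K)\, (A - \K^\star C)^\top.
\end{align*}
Iterating $n$ times yields $\bar{\Sigma}(\K) = (A - \K C)^n\, \bar{\Sigma}(\K)\, \big((A - \K^\star C)^\top\big)^n$, and letting $n \to \infty$ drives the right-hand side to zero because $A - \K C$ is stable on $\clo{\Set}$ (Assumption \ref{assum-K}) and $A - \K^\star C$ is stable (Assumption \ref{assum-system}d). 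Hence $\bar{\Sigma}(\K) = 0$, and reinserting this in \eqref{eq-def-Sigma-bar} gives $(\K - \K^\star)\, S^\star\, (\K - \K^\star)^\top = 0$, which forces $\K = \K^\star$ since $S^\star \GEQS 0$. I expect the only non-routine step to be spotting the cancellation $F + \Delta C = A - \K^\star C$: without it, stationarity gives two coupled nonlinear identities in $\K$ and $\bar{\Sigma}(\K)$, but with it they collapse into a single Stein-like equation whose iterates are annihilated by the two stability properties.
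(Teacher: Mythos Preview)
Your argument is correct and, from the identity $H=0$ onward, literally coincides with the paper's: the paper also substitutes $(\K-\K^\star)S^\star = (A-\K C)\bar{\Sigma}(\K)C^\top$ into \eqref{eq-def-Sigma-bar}, uses the same cancellation $F+\Delta C = A-\K^\star C$ to get $\bar{\Sigma}(\K) = (A-\K C)\bar{\Sigma}(\K)(A-\K^\star C)^\top$, iterates, and concludes via the two stability properties and $S^\star\GEQS 0$.

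The only difference is how $H=0$ is extracted from stationarity. You compute the full gradient via the adjoint Lyapunov equation, $\nabla\bar{V}(\K)=2\,\Pi(\K)\,H$ with $\Pi=F^\top\Pi F + C^\top W C$, and invoke $\Pi\GEQS 0$ from observability of $(F,C)$ and $W\GEQS 0$. The paper instead takes the directional derivative of $\bar{V}$ in the particular direction $D\coloneqq H$, obtains $\dot{\Sigma}=F\dot{\Sigma}F^\top+2DD^\top$, and combines $\Trace{WC\dot{\Sigma}C^\top}=0$ with the nonnegative series $C\dot{\Sigma}C^\top=2\sum_{i\ge 0}(CF^iD)(CF^iD)^\top$ to get $CF^iD=0$ for all $i$, hence $D=0$ by observability. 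These are two presentations of the same computation (the paper's directional derivative equals $2\,\Trace{D^\top\Pi D}$); your adjoint route is a bit more systematic and yields the explicit gradient formula as a by-product, while the paper's ``clever direction'' avoids introducing the dual variable $\Pi$.
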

\begin{proof} ``$\Leftarrow$'':
    Since $\bar{V}(\K^\star) = \Trace{W S^\star} = \min_\K \bar{V}(\K)$ and $\K^\star$ is in the interior of $\Set$, it is clear that $\K^\star$ is a stationary point of $\bar{V}(\K)$.

    ``$\Rightarrow$'':
    Let $\hat{\K} \in \clo{\Set}$ be such that $\nabla \bar{V}(\hat{\K}) = 0$.
    Define the direction $\dK \coloneqq (\hat{\K} - \K^\star) S^\star - (A-\hat{\K}C)\bar{\Sigma}(\hat{\K})C\transp$, and the directional derivative $\dot{\Sigma}$ in that direction, i.e.:
    \begin{align}\label{eq-proof-unimodality-1}
        \dot{\Sigma} = \lim_{\varepsilon \to 0} \frac{\bar{\Sigma}(\hat{\K} + \varepsilon \dK) - \bar{\Sigma}(\hat{\K})}{\varepsilon}.
    \end{align}
    By differentiating~\eqref{eq-def-Sigma-bar} in the direction $\dK$, we find:
    % $\dot{\Sigma}$ satisfies:
    \begin{align}\label{eq-proof-unimodality-2}
        \dot{\Sigma} = (A - \hat{\K} C) \dot{\Sigma} (A - \hat{\K} C)\transp + 2 \dK \dK\transp,
    \end{align}
    which itself implies:
    \begin{align}\label{eq-proof-unimodality-2.5}
        \!C\dot{\Sigma}C\transp \sequal 2\!  \sum_{i=0}^{+\infty} \!\of{C(A - \hat{\K} C)^i \dK} \of{C(A - \hat{\K} C)^i \dK}\transp\GEQ 0.\!
    \end{align}
    On the other hand, from the stationarity of $\hat{\K}$, we have $\Trace{W C \dot{\Sigma} C\transp } = 0$,
    which implies that $C \dot{\Sigma} C\transp = 0$ because $W \GEQS 0$ and $C \dot{\Sigma} C\transp \GEQ 0$.
    Since all of the terms of the zero-sum~\eqref{eq-proof-unimodality-2.5} are positive semi-definite, we deduce that each term is zero, i.e.
    \begin{align}\label{eq-proof-unimodality-3}
        \forall i \in \N, \quad C(A - \hat{\K} C)^i \dK = 0.
    \end{align}
    Since the pair $[A, C]$ is observable (cf. Assumption~\ref{assum-system}.a), the pair $[A - \hat{\K} C, C]$ is also observable (this is a consequence of the Hautus lemma~\cite{Hautus1969}).
    Hence, \eqref{eq-proof-unimodality-3} implies that $\dK = 0$.
    We continue as follows:
    \begin{subequations}\label{eq-proof-unimodality-4}
        \begin{align}
            0 &= \dK (\hat{\K} - \K^\star)\transp \\
            &=\! (\hat{\K} \sminus \K^\star) S^\star (\hat{\K} \sminus \K^\star)\transp \sminus (A \sminus \hat{\K}C)\bar{\Sigma}(\hat{\K})C\transp (\hat{\K} \sminus \K^\star)\transp \!\!\!\!\! \\
            &= \bar{\Sigma}(\hat{\K})  - (A-\hat{\K}C)\bar{\Sigma}(\hat{\K})(A-\K^\star C)\transp, \label{subeq-proof-unimodality}
        \end{align}
    \end{subequations}
    where we used again equation~\eqref{eq-proof-unimodality-2} to get~\eqref{subeq-proof-unimodality}.
    Repeating $i$ times the equality induced by~\eqref{subeq-proof-unimodality}, we find:
    \begin{align}\label{eq-proof-unimodality-5}
        \bar{\Sigma}(\hat{\K}) &= (A-\hat{\K}C)^i \bar{\Sigma}(\hat{\K}) (A-\K^\star C)^{i\top}.
    \end{align}
    Taking the limit when $i \to +\infty$, we find that $\bar{\Sigma}(\hat{\K})=0$.
    This limit holds because of the stability of the matrices $A - \K C$ for any $\K \in \clo{\Set}$ (cf. Assumption~\ref{assum-K} and the remark after it).
    Finally, injecting $\bar{\Sigma}(\hat{\K})=0$ in~\eqref{eq-def-Sigma-bar} and using the fact that $S^\star$ is positive-definite (cf. Assumption~\ref{assum-system}) we find $\hat{\K} = \K^\star$, which concludes the proof.
\end{proof}
Combining Lemma~\ref{lem-uniform-convergence} and Theorem~\ref{thm-unimodality} yields the following consistency result for local minimizers:
\begin{thm}[Consistency of stationary points]\label{thm-consistency}   % use the thm environment for theorems
    ~
    Let, for all $N \in \N$, $\hat{\K}_N \in \Set$ be stationary points of $V_N(\cdot)$. \\
    Then, $\hat{\K}_N$ is a strongly consistent estimate of $\K^\star$:
    \begin{align}
        \Pof{ \hat{\K}_N \mylimit{N} \K^\star } = 1.
    \end{align}
\end{thm}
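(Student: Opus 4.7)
The plan is to combine the uniform convergence of gradients (Lemma \ref{lem-uniform-convergence}) with the unique-stationary-point property (Theorem \ref{thm-unimodality}) by a standard compactness-and-subsequence argument on the compact set $\clo{\Set}$.

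First, I would work on the probability-one event $\Omega^\star$ on which $\sup_{\K \in \Set}\norm{\nabla V_N(\K) - \nabla \bar V(\K)} \to 0$. On $\Omega^\star$, since $\hat \K_N \in \Set$ and $\nabla V_N(\hat \K_N) = 0$ by stationarity, we immediately get
\begin{align*}
\norm{\nabla \bar V(\hat \K_N)}
= \norm{\nabla \bar V(\hat \K_N) - \nabla V_N(\hat \K_N)}
\leq \sup_{\K \in \Set}\norm{\nabla V_N(\K) - \nabla \bar V(\K)} \mylimit{N} 0.
\end{align*}
So the sequence $\hat \K_N$ asymptotically satisfies the stationarity condition of the limit objective $\bar V$.

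Second, I would exploit that $\clo{\Set}$ is compact (as noted after Assumption \ref{assum-K}) and that $\bar V$ is $\mathcal{C}^1$ on $\clo{\Set}$, the latter following from the uniform stability of $A - \K C$ on $\clo{\Set}$, which makes the solution $\bar{\Sigma}(\K)$ of the Lyapunov equation \eqref{eq-def-Sigma-bar} a smooth function of $\K$, hence $\nabla \bar V$ continuous. Then, from any subsequence of $\hat \K_N$, compactness extracts a further subsequence converging to some $\K_\infty \in \clo{\Set}$, and continuity of $\nabla \bar V$ forces $\nabla \bar V(\K_\infty) = 0$. By Theorem \ref{thm-unimodality}, $\K_\infty = \K^\star$. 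Since every subsequence has a sub-subsequence converging to the same limit $\K^\star$, the full sequence $\hat \K_N$ converges to $\K^\star$ on $\Omega^\star$, which has probability one.

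The only mildly nontrivial point is the continuity of $\nabla \bar V$ up to the boundary of $\Set$; this is where Assumption \ref{assum-K} extended to $\clo{\Set}$ is essential, as it guarantees that the Lyapunov equation defining $\bar{\Sigma}(\K)$ has a well-defined, smoothly varying solution on the whole closure, so that the subsequential limit $\K_\infty$ can safely be plugged into $\nabla \bar V$. Everything else is a direct application of the two preceding results.
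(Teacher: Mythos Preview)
Your proposal is correct and follows essentially the same approach as the paper's proof: restrict to the almost-sure event of Lemma~\ref{lem-uniform-convergence}, deduce $\nabla \bar V(\hat{\K}_N) \to 0$, and use compactness of $\clo{\Set}$ together with Theorem~\ref{thm-unimodality} to conclude that every limit point equals $\K^\star$. Your version is slightly more explicit about the continuity of $\nabla \bar V$ on $\clo{\Set}$ and the subsequence argument, but the structure and ideas are identical.
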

\begin{proof}
    Consider a realization for which the convergence from Lemma~\ref{lem-uniform-convergence} holds.
    Thus, for this realization, $\nabla \bar{V}(\hat{\K}_N)$ converges to zero.
    Furthermore, the sequence $\{\hat{\K}_N\}_{N \in \N}$ lies in  $\clo{\Set}$, which is compact (see the remark after Assumption~\ref{assum-K}).
    Let $\bar{\K}$ be any limit point of this sequence.
    By continuity of $\nabla \bar{V}(\cdot)$, we have $\nabla \bar{V}(\bar{\K}) = 0$.
    Using Theorem~\ref{thm-unimodality}, this implies that $\bar{\K} = \K^\star$.
    Since this holds for any limit point of the sequence $\{\hat{\K}_N\}_{N \in \N}$, this sequence converges to $\K^\star$ (for this realization).
    This is true for any realization in a probability-one set, and the desired almost sure convergence follows.
\end{proof}
\begin{cor}[Consistency of local minimizers]
    If $\hat{\K}_N$ are local minimizers of $V_N(\cdot)$ in the interior of $\Set$, then they are strongly consistent estimates of $\K^\star$.
\end{cor}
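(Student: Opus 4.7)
The plan is to observe that this corollary is essentially a direct consequence of Theorem \ref{thm-consistency}, once one invokes the standard first-order necessary condition for unconstrained local optimality. Since the predicted states $\hat{x}_k(K)$ are smooth (in fact polynomial) functions of $K$, the objective $V_N(K)$ is continuously differentiable on $\mathbb{R}^{n \times q}$, and in particular on the open interior of $\mathcal{L}$.

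The first step is to note that if $\hat{K}_N$ is a local minimizer of $V_N$ lying in the interior of $\mathcal{L}$, then no constraint is active at $\hat{K}_N$, so $\hat{K}_N$ is also a local minimizer of the \emph{unconstrained} problem of minimizing $V_N$ over an open neighborhood. By Fermat's first-order necessary condition this implies $\nabla V_N(\hat{K}_N) = 0$, i.e.\ $\hat{K}_N$ is a stationary point of $V_N(\cdot)$ in the sense used in Theorem \ref{thm-consistency}.

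The second step is then simply to apply Theorem \ref{thm-consistency} to the sequence $\{\hat{K}_N\}_{N \in \mathbb{N}}$, which yields $\mathbb{P}[\hat{K}_N \to K^\star] = 1$, as desired.

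There is essentially no obstacle here: the only thing to check carefully is that the sequence $\hat{K}_N$ remains in $\mathcal{L}$ (so that the stability-based arguments of Theorem \ref{thm-unimodality} and Lemma \ref{lem-uniform-convergence} can be used), but this is already built into the assumption that $\hat{K}_N$ lies in the interior of $\mathcal{L}$, hence in $\mathcal{L}$ itself. No asymptotic argument beyond Theorem \ref{thm-consistency} is required.
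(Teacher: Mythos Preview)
Your proposal is correct and follows exactly the paper's approach: invoke the first-order necessary condition to conclude that interior local minimizers are stationary points, then apply Theorem~\ref{thm-consistency}. The only difference is that you spell out differentiability of $V_N$ and the trivial containment $\hat{\K}_N \in \Set$, which the paper leaves implicit.
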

\begin{proof}
    This is a direct consequence of Theorem~\ref{thm-consistency} because local minimizers in the interior of the feasible set are stationary points.
\end{proof}

The results of this section rely on Assumptions~\ref{assum-system} and~\ref{assum-K}.
Even if a feasible set $\Set$ satisfying Assumption~\ref{assum-K} is not explicitly used, Theorem~\ref{thm-consistency} still holds provided that the sequence $\{\hat{\K}_N\}_{N \in \N}$ satisfies the uniform stability condition~\eqref{eq-uniform-stability}, for some $\gamma > 0$ and $\lambda \in (0,1)$.
This is the case whenever $\hat{\K}_N$ are bounded, and $\rho(A - \hat{\K}_N C)$ is bounded away from $1$.
As a conjecture, we also claim (without a proof) that this condition holds whenever the objective value $V_N(\hat{\K}_N)$ remains bounded.

\section{Numerical Examples}\label{section-numerical}

In this section, we illustrate the results of Section~\ref{section-consistency-local} with three examples.
First, a one-dimensional toy system allows us to visualize the objective function and its limit.
Second, a two-state system reveals the optimization landscape when several initial guesses are used.
Finally, a more realistic multi-output system highlights the consistency of the estimates.
All experiments are reproducible using the code accompanying this paper\footnotemark.

\myparagraph{A one-dimensional illustrative example.}
To visualize asymptotic unimodality, we consider a single-state system with known scalars $A, C \in \R$, and we generate data from~\eqref{eq-system} with a scalar Kalman gain $\K^\star \in \R$ and Gaussian innovations.
The objective function $V_N(\K)$ in~\eqref{eq-PEM} is evaluated for different values of $N$ using the weighting $W = 1$.
\begin{figure}[ht]
    \vspace{-0.2cm}
	\begin{center}
        \includegraphics[width=\linewidth]{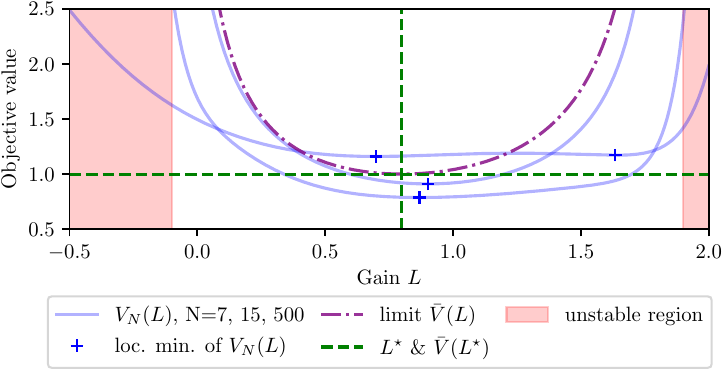}
	\end{center}
	\caption{
    Objective $V_N(\K)$ and its limit $\bar{V}(\K)$ for a one-dimensional system with $A = 0.9$, $C = 1, \K^\star = 0.8$ and $e_k \sim \Gausstiny{0}{1}$.
    }\label{fig-objective-function}
 \vspace{-0.3cm}
\end{figure}

Figure~\ref{fig-objective-function} shows $V_N(\K)$ for three values of $N$, together with its limit $\bar{V}(\K)$.
The shaded region indicates gains $\K$ for which $A-\K C$ is unstable (i.e. $\abs{A-\K C}\geq 1$ here).
For small $N$, several local minima are visible, even in this simple setting.
As $N$ increases, these spurious minima disappear and $V_N(\K)$ becomes unimodal, with its minimizer approaching the true gain $\K^\star$.

\myparagraph{A two-state example with scalar measurements.}
We next study a system with two states and scalar measurements.
The underlying continuous-time dynamics describe a particle subject to linear friction and a random piecewise-constant force:
\begin{align}\label{eq-dynamics-particle}
    \ddot{q}(t) &= -\mu \dot{q}(t) + w_k, \quad t \in [k\Delta t, (k+1)\Delta t),
\end{align}
where $q(t)$ is the position of the particle at time $t$, $\mu > 0$ is the friction coefficient, and $w_k \sim \Gausstiny{0}{\sigma_w^2}$ is a random force that remains constant over each sampling period of length $\Delta t > 0$.
The measurements take the form $y_k = q(k \Delta t) + v_k$, where $v_k \sim \Gausstiny{0}{\sigma_v^2}$ is the measurement noise.
This system is discretized analytically into a discrete-time LTI model with $x_k = [q(k \Delta t), \dot{q}(k \Delta t)]\transp$, which we can put in the innovation form~\eqref{eq-system} after computing the true Kalman gain $\K^\star$ from the Discrete Algebraic Riccati Equation (DARE)~\citep{Anderson1979}.
To estimate the gain, we solve the constrained PEM problem~\eqref{eq-PEM-stability} with the Lyapunov-based stability constraint with some constant $\alpha \in (0,1)$.
In practice, we optimize over $L$ only by eliminating $P$ via~\eqref{constraint-Lyapunov}.
Derivatives are computed with CasADi, and an interior-point method~\citep{Nocedal2006} is implemented\footnotemark[\value{footnote}] with line-search and a Gauss-Newton Hessian approximation.
% Details about the optimization algorithm are out of scope for this paper, but will most likely be the subject of a future paper.
\footnotetext{
    available at \url{https://github.com/Leo-Simpson/KalmanId}.
}

To explore the optimization landscape, we draw $50$ initial guesses uniformly in the feasible set $\Set_{\alpha}$ and solve the problem for different data lengths $N$.
Figure~\ref{fig-opti} shows the iterates and final solutions in the plane $(L_{11}, L_{21})$ together with the stable region where $\rho(A-LC)<1$ and the associated subset where $\alpha \Trace{P(L)-I}\leq 1$.
Note that the latter acts as a barrier for the former, which is consistent with the theory in~\citet{Diehl2009a}.
\begin{figure}[ht]
    \vspace{-0.1cm}
	\begin{center}
        \includegraphics[width=\linewidth]{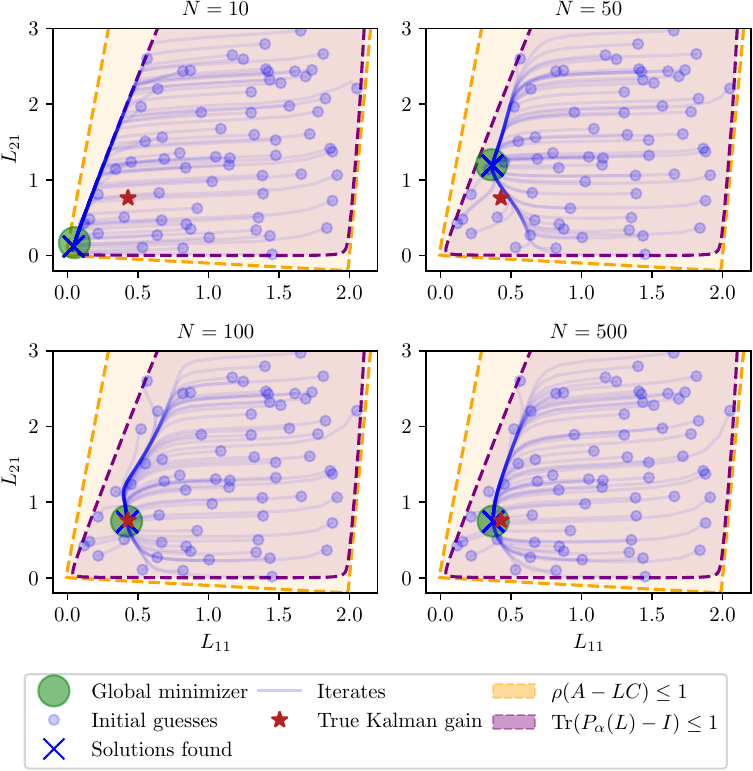}
	\end{center}
	\caption{
        Optimization iterates and solutions for the two-state example with $\alpha=0.02$,
        and the model parameters $\mu = 0.1$, $\Delta t = 0.1$, $\sigma_w = 10$, and $\sigma_v = 1$.
    }\label{fig-opti}
 \vspace{-0.3cm}
\end{figure}

Across all initializations and values of $N$, the algorithm always converges to the same point, which coincides with the global solution found by a dense grid search over $\Set_{\alpha}$.
As predicted by Theorem~\ref{thm-consistency}, this solution approaches the true gain $\K^\star$ as $N$ increases.

\myparagraph{A two-state example with two measurements.}
Finally, we consider a slightly more realistic problem: the same model as~\eqref{eq-dynamics-particle} but where both position and acceleration are being measured:
% TODO: finish this.
\begin{align}
    y_k = \begin{bmatrix} q(k \Delta t) + v_k & \qquad \ddot{q}(k \Delta t)  + v'_k \end{bmatrix}^\top.
\end{align}
Measurement noises are still Gaussian:  $v_k \sim \Gausstiny{0}{\sigma_{v}^2}$ and $v'_k \sim \Gausstiny{0}{\sigma_{v'}^2}$.
The process noise $w_k$ follows a mixture distribution: $w_k \sim \Gausstiny{0}{\sigma_w^2}$ with probability $p > 0$, and $w_k = 0$ with probability $1-p$.
Note that $w_k$ still has zero mean, finite fourth-order moments, and a positive variance $\Eof{w_k^2} = p \sigma_w^2$.

We generate several independent realizations of the dataset and, for each realization and each data length $N$, compute the gain estimate $\hat{L}_N$ using the same PEM formulation and optimization setup as in the previous example.
Figure~\ref{fig-consistency} compares $\hat{L}_N$ with the true Kalman gain $L^\star$, computed with the DARE as before.
\begin{figure}[ht]
    % \vspace{-0.3cm}
	\begin{center}
        \includegraphics[width=\linewidth]{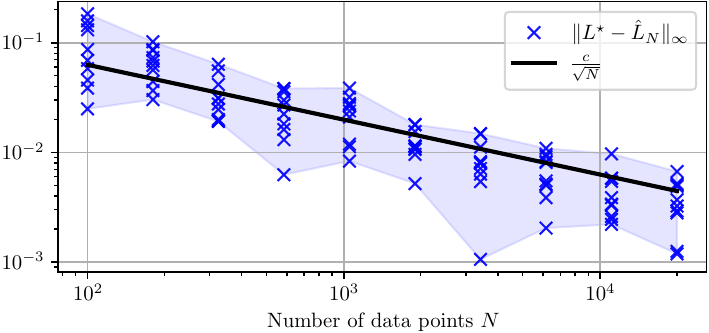}
	\end{center}
	\caption{
        Estimation error in the example with two measurements, for $10$ realizations.
        The model parameters are as in Figure~\ref{fig-opti}, and the noise parameters are $\sigma_{v} = 2$, $\sigma_{v'} = 1$, $\sigma_{w} = 1$ and $p = 0.1$.
    }\label{fig-consistency}
 \vspace{-0.1cm}
\end{figure}
The results show a clear decrease in the estimation error as $N$ grows, despite the non-Gaussian process noise.
This empirical behavior is consistent with the strong consistency of local minimizers established in Theorem~\ref{thm-consistency}.
We also observe a convergence speed of order $\mathcal{O}(1/\sqrt{N})$ which is consistent with the law of large numbers that was used in the proof.

Finally, for $N=100$, we solve the problem starting from $50$ different initial guesses generated using Kalman filters with random matrices $Q$ and $R$.
We observe that all distances between the solutions are less than $10^{-4}$, which is the numerical tolerance of the solver.
This suggests that all initializations lead to the same minimizer, which provides empirical evidence of the asymptotic unimodality of the cost.

\section{Extensions}\label{section-extension}

It is possible to extend the results established in Section~\ref{section-consistency-local} to the case where the covariance matrix $S=\Covof{e_k}$ is also estimated jointly with $\K$.
One can even extend the results by estimating additional parameters that enter linearly in the system dynamics.
Consider the following system dynamics, in the innovation form, for $k=1, \dots, N$:
\vspace{-0.25cm}
\begin{subequations}\label{eq-system_extended}
\begin{align}
x_{k+1} &= A x_k + B u_k + \Phi_k \beta + \K e_k, \\
y_k &= C x_k + e_k.
\end{align}
\end{subequations}
where $\beta \in \R^{\nbeta}$ is an unknown parameter vector to be estimated, and $\Phi_k \in \R^{\nx \times \nbeta}$ are some known regression matrices.
The parameters to be estimated are $\theta \coloneqq (\beta, \K, S)$.
Here, the predicted states also depend on $\beta$:
\begin{align}\label{eq-kalman_extended}
    \hat{x}_{k+1}(\theta) &= A \hat{x}_k(\theta) + B u_k + \Phi_k \beta
    % \\ &\qquad \nonumber
    + \K \of{y_k - C \hat{x}_k(\theta)}.
\end{align}
Regarding the joint estimation of $S$, the cost function in~\eqref{eq-PEM} must also be modified.
Instead, the MLE approach for dynamical systems can be employed~\citep{Astrom1979, Simpson2023}:
\begin{align}\label{eq-MLE}
    \underset{\theta = (\beta, \K, S)}{\text{minimize}} \;
        \frac{1}{N} \sum_{k=1}^N \norm{y_k - C \hat{x}_k(\theta)}^2_{S^{-1}} \splus \logdet{S}
\end{align}
As in~\eqref{eq:VN-simplified}, this cost function can be rewritten as follows:
\begin{align}\label{eq:MLE-simplified}
     J_N(\beta, \K, S) = \frac{1}{N}& \sum_{k=1}^N \norm{e_k - C z_k(\K) + \Psi_k(\K)(\beta^\star - \beta) }^2_{S^{-1}}   \nonumber \\
     &\hspace{15mm} + \logdet{S}, 
\end{align}
with $\Psi_k(\K) \coloneqq C \sum_{i=1}^{k} (A-\K C)^{i-1} \Phi_{k-i}$, and $z_k(\K)$ defined as in~\eqref{eq:zk}.
Here, to ensure consistency of the estimates, we also need some Persistent Excitation (PE) condition for $N$ large enough and some stabilizing gain $\bar{\K}$:
\begin{align}\label{eq-PE-condition}
    \frac{1}{N} \sum_{k=1}^N \Psi_k(\bar{\K})\transp\Psi_k(\bar{\K}) \GEQ c I \GEQS 0.
\end{align}
One can prove, using some linear algebra, that when this condition holds for some $\bar{\K}$, it also holds for any other $\K$.
% TODO: check again this claim

Furthermore, to prove asymptotic unimodality, we also need to assume that the noise $e_k$ is independent of the regressors $\Phi_k$, which  excludes the case $\Phi_k=\Phi(u_k)$ when the data come from a closed-loop experiment unfortunately.
\begin{claim}
    The results from Section~\ref{section-consistency-local} still hold for the optimization problem~\eqref{eq-MLE}, under the assumption given above.
\end{claim}
\begin{proof}[Sketch of proof]
    Using~\eqref{eq:MLE-simplified}, the decomposition  $\Eof{J_N(\theta)} = \bar{J}^{\beta}_N(\beta, \K, S^{-1}) + \bar{J}^{\mr{L}}_N(\K, S^{-1}) + \bar{J}^{\mr{S}}_N(S)$ holds true, with the following terms:
    \begin{subequations}
        \begin{align}
            \bar{J}^{\beta}_N(\beta, \K, W) &\coloneqq  \norm{\beta - \beta^\star}^2_{Q_N(\K,W)}, \\
            \bar{J}^{\mr{L}}_N(\K, W) &\coloneqq  \frac{1}{N} \sum_{k=1}^N \Eof{\norm{C z_k(\K)}^2_{W}}, \\
            \bar{J}^{\mr{S}}_N(S) &\coloneqq  \Trace{S^{-1} S^\star } + \logdet{S},
        \end{align}
    \end{subequations}
    with $Q_N(\K,W) \coloneqq \frac{1}{N} \sum_{k=1}^N \Psi_k(\K)\transp W \Psi_k(\K)$. \\
    Let $\hat{\theta} = (\hat{\beta}, \hat{\K}, \hat{S})$ be a stationary point of $\sEof{J_N(\theta)}$.
    Thus, $\hat{\beta}$ is a stationary point of $\bar{J}^{\beta}_N(\beta, \K, W)$ for $W = \hat{S}^{-1}$, which implies $Q_N(\hat{\K}, W) (\hat{\beta} - \beta^\star) =0$.
    Using the PE condition~\eqref{eq-PE-condition}, we have $Q_N(\hat{\K}, W) \GEQS 0$, so $\hat{\beta} = \beta^\star$.
    Then, $\hat{\K}$ is a stationary point of $\bar{J}^{\mr{L}}_N(\K, W)$.
    Note that this function is the same as $\sEof{V_N(\K)}$ from Section~\ref{section-consistency-local}.
    Therefore, asymptotically its unique stationary point is $\K^\star$, provided that $S$ remains positive-definite and bounded.
    These first steps imply $\sEof{J_N(\hat{\theta} )}=\bar{J}^{\mr{S}}_N(\hat{S})$, and the unique stationary point of $\bar{J}^{\mr{S}}_N(S)$ is $S^\star$, so $\hat{S} = S^\star$, which concludes the sketch of the proof.
\end{proof}

\section{Conclusion}\label{section-conclusion}

This paper established that, when estimating only the Kalman gain of a known linear system, the PEM objective becomes asymptotically unimodal, and therefore remains a reliable identification method even when only local optimality can be guaranteed.
The numerical examples support this conclusion: spurious minimizers disappear as the dataset grows, and standard optimization routines consistently converge to the true gain.
The sketched extension to jointly estimating additional linear parameters and noise covariances is also encouraging, and formal proofs for such extensions are a future research direction.

Several open questions remain, for example, regarding the influence of model mismatch or the convergence speed of the local optima.
Another promising direction is the development of optimization algorithms tailored to this problem class; for instance, alternating or sequential updates over model parameters, Kalman gains, and covariances may yield fast and guaranteed convergence.

% As stated in the abstract, the main practical takeaway of this paper is that difficulties caused by local minimizers in system identification are, at least, not attributable to the estimation of the Kalman gain.

\bibliography{biblio}
% \blfootnote{
%     $~^*$ we used the following abbreviations in the bibliography: \\
%     TAC: IEEE Transactions on Automatic Control; \\
%     IFAC: Proceedings of the IFAC World Congress.
% }
\appendix

\section{Technical proofs}
\begin{proof}[Proof of Proposition~\ref{prop-K-stability}]
    Since $A - \K^\star C$ is stable (cf. Assumption~\ref{assum-system}), the constraints (\ref{constraint-Lyapunov}-\ref{constraint-PSD}) are satisfied for some $P^\star$ if $\K = \K^\star$ and if $\alpha$ is small enough.
    This is a consequence of the fact that a discrete-time Lyapunov equation always has a solution when the corresponding matrix is stable.

    Now let $\K \in \Set_{\alpha}$, and let $P$ be a matrix satisfying (\ref{constraint-Lyapunov}-\ref{constraint-PSD}).
    Note that $\alpha (P-I) \LEQ I$.
    This leads to:
    \begin{align}
        \! P \GEQ (1+\alpha) (P-I) = (1+\alpha) (A \sminus \K C) P (A \sminus \K C)\transp.
    \end{align}
    Iterating this inequality $i$ times leads to:
    \begin{align}
        P \GEQ (1+\alpha)^i (A \sminus \K C)^i P (A \sminus \K C)^{i \top}.
    \end{align}
    Using $I \LEQ P \LEQ \of{1+\alpha^{-1}} I$, we can obtain:
    \begin{align}
        (1+\alpha)^i\norm{(A \sminus \K C)^i}^2 
        \leq 1+\alpha^{-1},
    \end{align}
    which proves~\eqref{eq-uniform-stability} with $\cUS = \sqrt{1+\alpha^{-1}}$ and $\rhoUS = \frac{1}{\sqrt{1 + \alpha}} < 1$.
\end{proof}

\begin{lem}[Theorem 2B.1 in~\citet{Ljung1999SystemId}]\label{lem-LLN}
    Let $w_k$ be independent random variables with ${\, \sup\limits_{k \in \N} \, \Eof{\norm{w_k}^4} < \infty}$. \\
    Let $H(\theta)$ be a family of uniformly stable filters over $\theta \in \Theta$, i.e., for some $\lambda \in (0,1)$ and $c_H > 0$, ${\norm{H_i(\theta)} \le c_H \lambda^i}$ holds for all $i \in \N$ and $\theta \in \Theta$. \\
    Also define $s_k(\theta) \coloneqq \sum_{i=0}^{k} H_i(\theta) w_{k-i}$. \\
    Then, the following uniform law of large numbers holds almost surely (i.e., with probability one):
    \begin{align}
        \sup_{\theta \in \Theta} \frac{1}{N}\norm{\sum_{k=1}^N s_k(\theta)s_k(\theta)\transp - \Eof{s_k(\theta)s_k(\theta)\transp } } \mylimit{N} 0.
    \end{align}
\end{lem}
% This lemma is presented in~\citet[Theorem 2B.1]{Ljung1999SystemId}, and the proof can be found there.
% There, the statement is presented slightly differently, in frequency-domain notation, but the two statements are equivalent.
% Note that there is no assumption about the set $\Theta$ in this theorem.

\begin{proof}[Proof of Lemma~\ref{lem-uniform-convergence}]
    The idea of the proof is to first show $\nabla V_N(\K) \to \Eof{\nabla V_N(\K)}$, then we will show $\Eof{\nabla V_N(\K)} \to \nabla \bar{V}(\K)$, and conclude from this that the desired result $\nabla V_N(\K) \to \nabla \bar{V}(\K)$ as in~\eqref{eq-uniform-convergence-dV} holds. \\
    First, $s_k(\theta)\equiv e_k - C z_k(\K)$ satisfies the conditions of Lemma~\ref{lem-LLN} with $\theta = \K$ and $\Theta = \Set$.
    For any pair of indices $(i,j)$, the same holds for $\frac{\partial z_k(\K)}{\partial \K_{ij}}$.
    Thus, applying this lemma leads to the first desired result:
    \begin{align}\label{eq-uniform-convergence-dV-1}
        \Pof{\sup_{\K \in \Set} \norm{ \frac{\partial V_N(\K)}{\partial \K_{ij}}  - \Eof{\frac{\partial V_N(\K)}{\partial \K_{ij}}} } \mylimit{N} 0}& =1.
    \end{align}

    Next, we express $\Eof{V_N(\K)}$ and compare it with $\bar{V}$ in~\eqref{eq-def-V-bar}:
    \begin{align}\label{eq-proof-uniform-convergence-1}
        \Eof{V_N(\K)}&=
        \Trace{W \ofc{ S^\star + C \of{\textstyle \frac{1}{N}\sum\limits_{k=1}^{N} \Covof{z_k(\K)}} C\transp} }
        \nonumber \\
        &= \bar{V}(\K)  + \Trace{ \! W C \of{\textstyle \frac{1}{N}\sum\limits_{k=1}^{N} X_k(\K)} C\transp},
    \end{align}
    with $X_k(\K) \coloneqq \Covof{z_k(\K)} - \bar{\Sigma}(\K)$.
    Using the the dynamics of $z_k(\K)$ in~\eqref{eq:zk} we can derive:
    \begin{align}
        \Covof{z_{k+1}(\K)} &= (A  - \K C) \Covof{z_k(\K)} (A  - \K C)\transp \\
        \nonumber &\hspace{20mm}  + (\K - \K^\star) S^\star  (\K - \K^\star)\transp.
    \end{align}
    Comparing this with the recursion for $\bar{\Sigma}(\K)$ in~\eqref{eq-def-Sigma-bar}, we find that $X_k(\K)$ satisfies:
    \begin{align}\label{eq-proof-uniform-convergence-2}
        X_{k+1}(\K) &= (A  - \K C) X_k(\K) (A  - \K C)\transp.
    \end{align}
    Using the uniform stability from Assumption~\ref{assum-K}, equation~\eqref{eq-proof-uniform-convergence-2} implies that $X_k(\K)$ converges to zero uniformly over $\K \in \Set$.
    This result, combined with~\eqref{eq-proof-uniform-convergence-1} leads to:
    \begin{align}\label{eq-uniform-convergence-dV-2}
        \sup_{\K \in \Set} \norm{ \Eof{\frac{\partial V_N(\K)}{\partial \K_{ij}}} - \frac{\partial \bar{V}(\K)}{\partial \K_{ij}} } \mylimit{N} 0.
    \end{align}
    By combining this with the uniform convergence equations~\eqref{eq-uniform-convergence-dV-1}, it follows that $\frac{\partial V_N(\K)}{\partial \K_{ij}} \to \frac{\partial \bar{V}(\K)}{\partial \K_{ij}}$ almost surely and uniformly.
    This holds for any pair of indices $(i,j)$ so the uniform convergence of the whole gradient is established.
\end{proof}

\end{document}